\definecolor{Gray}{gray}{0.9}
\definecolor{White}{gray}{1}
\newcolumntype{P}[1]{>{\centering\arraybackslash}m{#1}}
\newcolumntype{A}{>{\columncolor{White}}P}
\def\BibTeX{{\rm B\kern-.05em{\sc i\kern-.025em b}\kern-.08em
    T\kern-.1667em\lower.7ex\hbox{E}\kern-.125emX}}
\newtheorem{prop}{Proposition}
\newtheorem{remark}{Remark}
\newcommand*{\tcr}[1]{\textcolor{black}{#1}}
\def\tcb{\textcolor{black}}
\def\tcr{\textcolor{black}}
\def\tcbb{\textcolor{black}}
\begin{document}
\bstctlcite{IEEEexample:BSTcontrol}

\title{Beyond Beam Sweeping: One-Shot Satellite Acquisition with Doppler-Aware Rainbow Beamforming}


\author{Juha Park, Ian P. Roberts, and Wonjae Shin
    \thanks{
    }
    \thanks{J. Park and W. Shin are with the School of Electrical Engineering, Korea University, Seoul 02841, South Korea 
    (email: {\texttt{\{juha, wjshin\}@korea.ac.kr}}).}
    \thanks{I. P. Roberts is with the Department of Electrical and Computer Engineering, UCLA, Los Angeles, CA 90095, USA (e-mail: ianroberts@ucla.edu).}
    }
\maketitle
\begin{abstract}
High-gain beamforming (BF) is essential for low Earth orbit (LEO) satellite communications to overcome severe path loss, but this requires acquiring precise satellite positions. Conventional satellite acquisition typically relies on time-domain beam sweeping, which incurs substantial overhead and latency. \tcb{In this correspondence, we propose an efficient one-shot satellite acquisition framework that capitalizes on two phenomena traditionally regarded as impairments: i) Doppler effects and ii) beam-squint effects.} Specifically, we derive a closed-form \emph{rainbow beamformer} that leverages beam-squint effects to align frequency-dependent beam directions with satellite positions inferred from their Doppler shifts. 
 This approach enables reception from multiple satellites {at once} without requiring beam sweeping. To extract satellite position information, we develop three Doppler-aware angle estimation algorithms based on received signals. Simulation results demonstrate that the proposed method significantly outperforms conventional beam sweeping approaches in both acquisition accuracy and required time slots. These gains stem from the ability of the proposed rainbow BF to exploit the \emph{angle-dependent nature of Doppler shifts}, enabling full angular-domain coverage with a single pilot transmission and reception.
\end{abstract}

\begin{IEEEkeywords}
Beam-squint effect, low Earth orbit (LEO) satellite, frequency-dependent beamforming, Doppler shift.
\end{IEEEkeywords}
\section{Introduction}
Low Earth orbit (LEO) satellite communication (SATCOM) has emerged as a key enabler of global coverage and high data rates in 5G and beyond \cite{kodheli2020satellite}. To overcome the severe path loss resulting from the long propagation distances, LEO SATCOM systems require precise high-gain beamforming (BF). Achieving such high BF gain necessitates the use of large antenna arrays (e.g., Starlink's \num{1280}-element phased array \cite{yang2023starlink}), which form narrow pencil beams. As a result, acquiring an estimate on the position of the satellite within the field of view (FOV) becomes critical. This satellite acquisition process is essential not only during the initial access phase to discover available satellites for communication but also during the data communication phase to support seamless handover. However, conventional analog BF architectures are typically constrained by a limited number of available radio frequency (RF) chains, which restricts the number of simultaneously steerable beams. 
 As a result, time-domain beam sweeping using \tcr{a small set of simultaneous beams has traditionally been} employed for satellite acquisition. Despite its practicality, this method incurs significant system overhead and latency due to its sequential search nature. 
\tcb{To address this limitation, we turn two phenomena, typically regarded as impairments in LEO SATCOM systems, into enablers: i) Doppler effects and ii) beam-squint effects.} Rather than compensating for these effects, our approach embraces them to facilitate efficient one-shot satellite acquisition. 

The high velocity of LEO satellites induces substantial Doppler shifts, up to approximately \num{450} kHz at a \num{20} GHz carrier frequency and 
 \num{500} km orbital altitude. These Doppler shifts in LEO SATCOM can cause severe inter-carrier interference (ICI) in orthogonal frequency division multiplexing (OFDM) systems \cite{1638663}, where frequency offsets lead to overlap between adjacent subcarriers. 
 {Existing research on LEO SATCOM} has primarily focused on mitigating the Doppler effect for reliable communications. For instance, \cite{10464931} proposed a Doppler compensation technique in which satellites pre-compensate the Doppler shift at the beam center, while the user terminal post-compensates the residual Doppler shift.

Analog BF architectures based on phase shifters (PS) suffer from beam-squint effects, wherein distinct frequency components inadvertently steer to slightly different directions \cite{dai2022delay,you2022beam}. This frequency-dependent deviation can lead to {misalignment in beam} {pointing}, particularly in wideband systems. {The} {majority of studies on analog/hybrid BF have focused on mitigating the beam-squint effects by utilizing advanced baseband processing \cite{mewes2025beam}.} \tcb{Recent works \cite{ratnam2022joint,10914514} have begun exploring how this effect can be constructively leveraged in terrestrial systems through true time delay (TTD) architectures, enabling frequency-dependent BF, also known as \emph{rainbow BF}.}

\tcb{In addition to Doppler effects and beam-squint effects, majority of prior research has focused on analyzing the impact of various hardware impairments on system performance and mitigating them, such as RF chain impairments \cite{li2024stacked} and mutual coupling \cite{li2025holographic}.} \tcb{In contrast to prior works that treat hardware impairments as performance-degrading factors, this correspondence proposes a novel satellite acquisition framework that constructively exploits these impairments to dramatically reduce acquisition overhead in LEO satellite communications. First, we derive a closed-form rainbow beamformer that aligns frequency-dependent beam directions with satellite positions based on their corresponding Doppler frequency, by controlling the beam-squint effects. This approach enables simultaneous reception and amplification of pilot signals from multiple satellites, each experiencing a distinct Doppler shift. Subsequently, we introduce three Doppler-aware angle estimation algorithms to extract satellite directions from the received signals by exploiting the angle-dependent nature of Doppler shifts. By doing so, the angular positions of multiple satellites can be acquired with a single pilot transmission and reception, referred to as one-shot estimation throughout this correspondence.}
\subsection{Notation}
Scalars, vectors, and matrices are denoted by standard letters, lower-case boldface letters, and upper-case boldface letters, respectively. The imaginary unit is $j\triangleq\sqrt{-1}$. The notations $(\cdot)^{\sf{T}}$ and $(\cdot)^{\sf{H}}$ denote the transpose and conjugate transpose, respectively. $\mathbb{E}[\cdot]$ denotes the statistical expectation. $\mathcal{CN}(\mu,\sigma^2)$ represents a complex Gaussian random variable with mean $\mu$ and variance $\sigma^2$. $\Re\{\cdot\}$ denotes the real part of a complex number and is applied element-wise for vectors. $\mathbf{x}^{(n)}$ represent the $n$-th element of vector $\mathbf{x}$. $\lfloor x \rfloor$ denotes the greatest integer less than or equal to a given real number $x$.

\section{System Model}
\begin{figure}[t]
\centering
\includegraphics[width=0.8\linewidth]{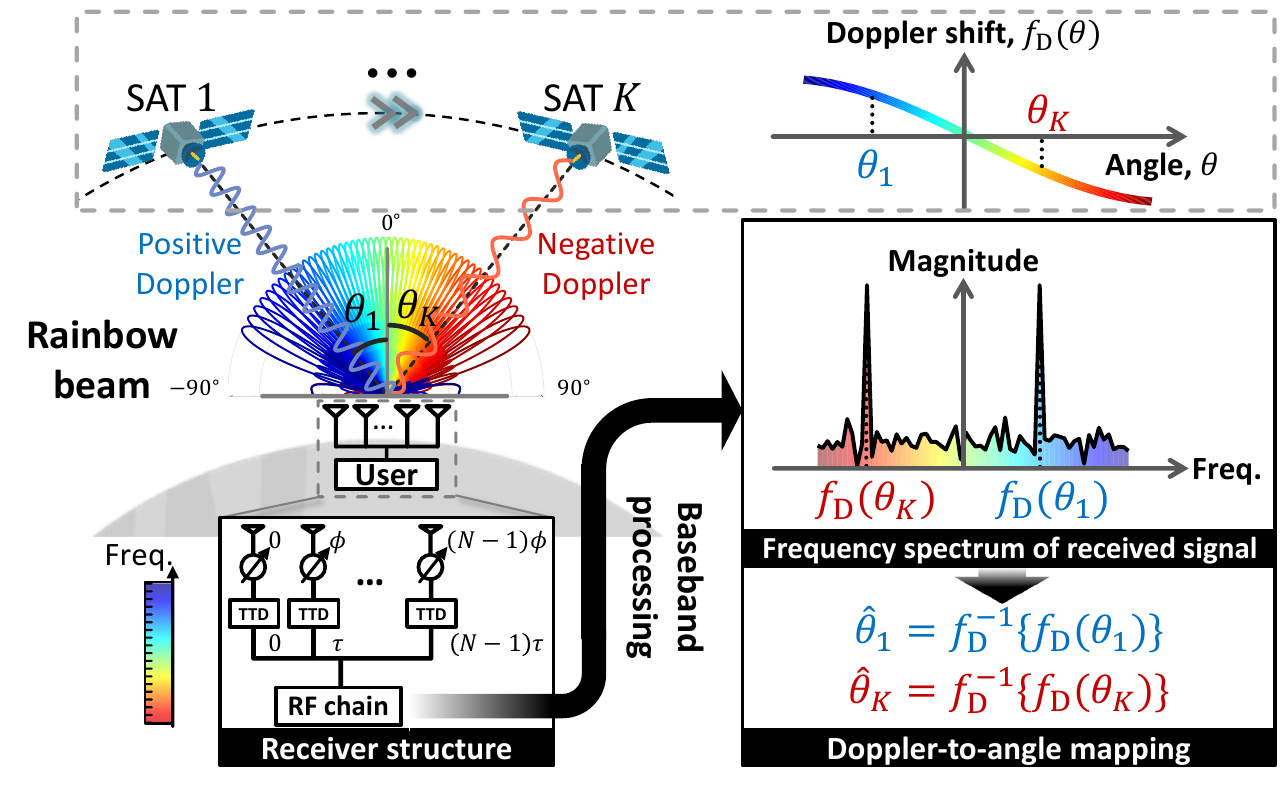}
\caption{System model and schematic of the proposed rainbow BF-empowered satellite acquisition {framework.}}
\label{system model}
\end{figure}

We consider a joint phase-time array (JPTA) architecture \cite{ratnam2022joint} with a uniform linear array (ULA) at the user terminal\footnote{\tcr{Although a ULA is considered for simplicity, the proposed framework can be extended to other array configurations by generalizing the array geometry and the corresponding array steering vector.}}, consisting of $N$ half-wavelength spaced antenna elements operating at center frequency $f_{\sf c}$ \cite{ratnam2022joint}. Each antenna element is individually connected to a dedicated TTD and phase shifter (PS), as illustrated in Fig. \ref{system model}. {Although both parameters may be generally reconfigurable,} the phase shift and delay values for the $n$-th antenna element are \tcr{fixed} to $(n-1)\phi$ and $(n-1)\tau$, respectively, i.e., a uniformly weighted array with progressive delay $\tau$ and progressive phase shift $\phi$. 
{The boresight direction of the user terminal is defined as $0^\circ$ and is assumed to coincide with the zenith for analytical convenience.} We denote the minimum elevation angle (i.e., the lowest elevation angle at which a satellite can be detected) as $\beta$. Accordingly, the FOV is defined as the angular range $\boldsymbol{\Psi} \triangleq [-\frac{\pi}{2}+\beta,\frac{\pi}{2}-\beta]$. Suppose there are $K$ satellites simultaneously transmitting single-tone pilot signals at frequency $f_{\sf p}$. The off-boresight angle of the $k$-th satellite relative to the user terminal is denoted by $\theta_k\in\boldsymbol{\Psi}$. For a specific pilot transmission and reception duration (i.e., a time slot), the passband signal transmitted by 
the $k$-th satellite is modeled as $\Re\{x_ke^{j2\pi f_{\sf p} t}\}$, where $x_k$ is the pilot symbol and $t$ is the continuous time index.

Let $\mathbf{h}_k\in\mathbb{C}^{N\times1}$ denote the flat-fading baseband channel vector from the $k$-th satellite to the user. The received passband signal at the user terminal, $\mathbf{y}_{\sf PB}(t)\in\mathbb{R}^{N\times1},$ is expressed as
\begin{equation}
\mathbf{y}_{\sf PB}(t) = \Re\left\{\sum_{k=1}^K\mathbf{h}_k x_ke^{j2\pi\left\{{f_{\sf p}}+f_{\sf D}(\theta_k)\right\}t}\right\}+\mathbf{z}_{\sf PB}(t),
\end{equation}
where $f_{\sf D}\left(\theta\right)$ is the angle-dependent Doppler frequency due to satellite motion, and $\mathbf{z}_{\sf PB}(t)$ denotes Gaussian noise. \tcb{We adopt a widely used LEO SATCOM channel model \cite{you2020massive,you2022beam,10680149}, in which $\mathbf{h}_k=g_k(f_{\sf p}+f_{\sf D}(\theta_k))\mathbf{a}\left(f_{\sf p}+f_{\sf D}\left(\theta_k\right),\theta_k\right)$, where $g_k(f_{\sf p}+f_{\sf D}(\theta_k))$ is the complex channel gain and  $\mathbf{a}\left(f,\theta\right)\triangleq[1,e^{-j\pi\frac{f}{f_c}\sin\theta}, \cdots,e^{-j\pi\frac{f}{f_c}(N-1)\sin\theta}]^{\sf T}$ is the frequency-dependent array response vector.}\footnote{The array response model captures the frequency-dependent nature of the array response that leads to the beam-squint effect. Under a narrowband assumption, i.e., $\frac{f}{f_{\sf c}} \approx 1$, the array response vector reduces to the conventional array response,
$\mathbf{a}\left(f,\theta\right)\approx[1,e^{-j\pi\sin\theta}, \cdots,e^{-j\pi(N-1)\sin\theta}]^{\sf T}$, \tcr{$\forall f$, which is no longer a function of frequency.}} \tcb{After passing through the PS and TTD units, the received signal at the $n$-th antenna element (neglecting noise) can be expressed as
\begin{align}
&\Re\left\{\sum_{k=1}^K\mathbf{h}_k^{(n)} x_k e^{j2\pi\left\{f_{\sf p}+f_{\sf D}(\theta_k)\right\}\left\{t-(n-1)\tau\right\}}e^{j(n-1)\phi}\right\}
\nonumber\\
=&\Re\left\{\sum_{k=1}^K\mathbf{h}_k^{(n)} x_k \underbrace{e^{j(n-1)\left[\phi-2\pi\left\{f_{\sf p}+f_{\sf D}(\theta_k)\right\}\tau\right]}}_{\text{JPTA response for}\,n\text{-th element}} e^{j2\pi\left\{f_{\sf p}+f_{\sf D}(\theta_k)\right\}t}\right\},
\end{align} where $\mathbf{h}_k^{(n)}$ denotes $n$-th element of $\mathbf{h}_k$.}
\tcb{Here, we observe that the expression exhibits the \emph{frequency-dependent phase shifts} of the JPTA.} 
\tcb{The discrete-time received complex baseband signal after signal combining (receive BF), down-converting to $f_{\sf p}$, and sampling is given by}
\begin{align}
y[\ell]=\sum_{k=1}^K \tilde{g}_k x_k e^{j2\pi\frac{f_{\sf D} \left(\theta_k\right)}{{f_{\sf s}}}\ell}+z[\ell], \ell\in\{0,\cdots,L-1\},
\label{Beamformed RX signal}
\end{align}
\tcb{where $\tilde{g}_k\triangleq g_k(f_{\sf p}+f_{\sf D}(\theta_k))\mathbf{w}(f_{\sf p}+f_{\sf D} (\theta_k),\tau,\phi)^{\sf H}\mathbf{a}\left(f_{\sf p}+f_{\sf D} (\theta_k),\theta_k\right)$ is the effective channel coefficient with arguments omitted for brevity;} $\ell$ is the discrete time index; $f_{\sf s}$ denotes the sampling frequency; $\mathbf{w}(f,\tau,\phi)\triangleq[1,e^{j\{\phi-2\pi f \tau\}},\cdots,e^{j(N-1)\{\phi-2\pi f \tau\}}]^{\sf H}$ denotes the frequency-dependent JPTA beamformer; $z[\ell]\sim\mathcal{CN}(0,\sigma^2)$ is the effective complex Gaussian noise.

\subsection{Angle-Dependent Doppler Shift Model}
\begin{figure}[t]
\centering
\includegraphics[width=0.7\linewidth]{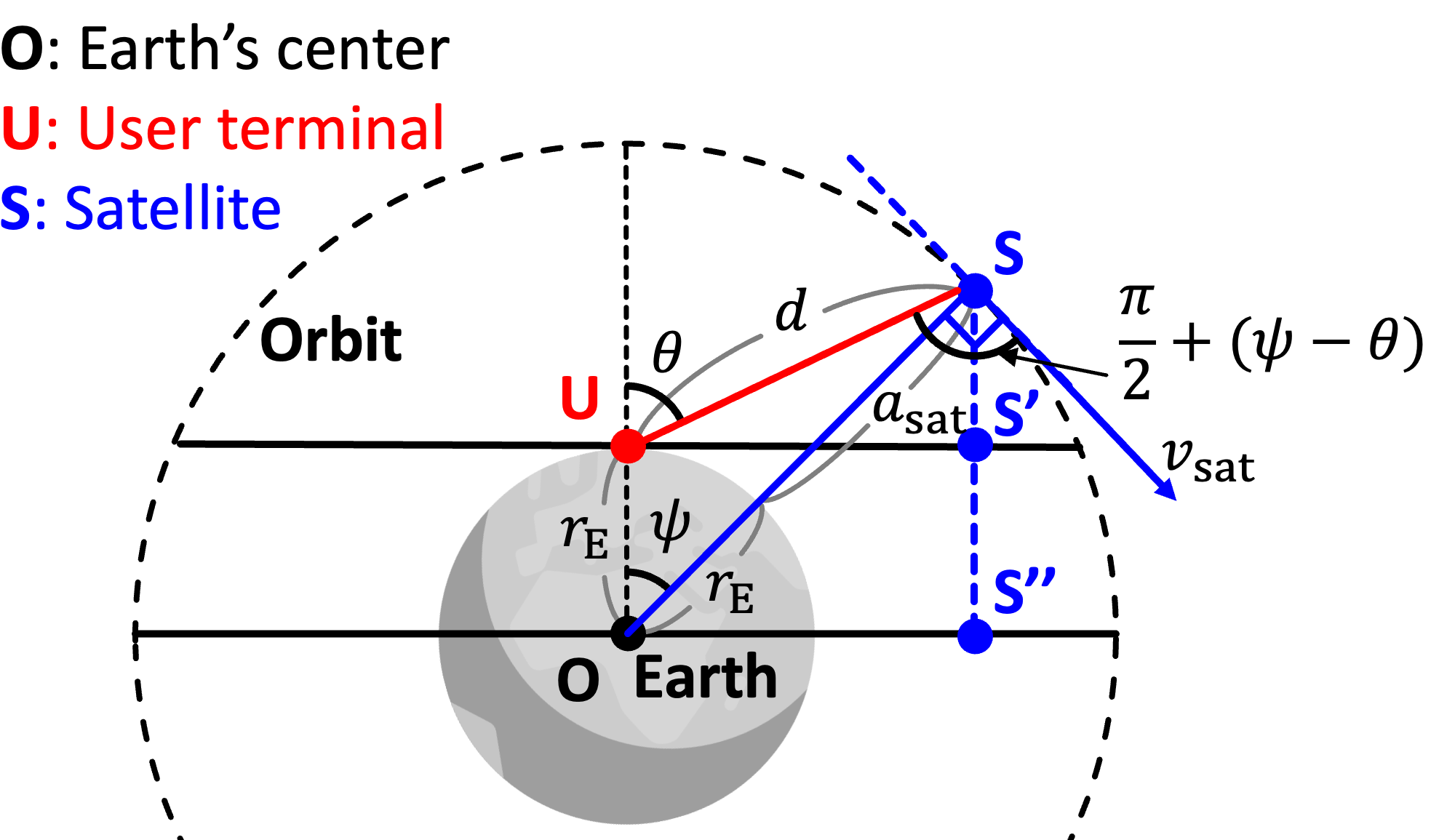}
\caption{Satellite-user-Earth geometry for Doppler model derivation.}
\label{Doppler_geometry}
\end{figure}

We \tcr{now} derive the angle-dependent Doppler shift model to characterize $f_{\sf D}({\theta})$. Based on the satellite-user-Earth geometry in Fig. \ref{Doppler_geometry}, we begin with the standard Doppler shift formula \cite{10694172}:
\begin{equation}
f_{\sf D}(\theta) = \frac{f_{\sf p}v_{\sf sat}}{c}\cos\left(\frac{\pi}{2}+(\psi-\theta)\right) = -\frac{f_{\sf p}v_{\sf sat}}{c}\sin\left(\psi-\theta\right),
\label{Basic Doppler Formula}
\end{equation}
where $c$ denotes the speed of light; $v_{\sf sat}$ is the satellite velocity; $\psi$ represents the satellite angle with respect to Earth's center. 
Applying the law of sines to triangle $\mathrm{SUO}$ (see Fig. \ref{Doppler_geometry}) yields
\begin{equation}
\frac{\sin(\psi-\theta)}{r_{\sf E}}=\frac{\sin\psi}{d},
\label{Sine law}
\end{equation}
where $r_{\sf E}$ and $d$ are Earth's radius and the satellite-to-user distance, respectively.
Utilizing $\overline{\rm US'}=\overline{\rm OS''}$, we have
\begin{equation}
d\sin\theta=(r_{\sf E}+a_{\sf sat})\sin\psi\,\,\,\Leftrightarrow	\,\,\,
\frac{\sin\psi}{d}=\frac{\sin\theta}{r_{\sf E}+a_{\sf sat}},
\label{Sine law 2}
\end{equation}
where $a_{\sf sat}$ is \tcr{the} orbital altitude. Finally, substituting Eq. \eqref{Sine law} and Eq. \eqref{Sine law 2} into Eq. \eqref{Basic Doppler Formula}, the Doppler shift simplifies to a sinusoidal function of the satellite angle $\theta$: 
\begin{equation}
f_{\sf D}(\theta) = -\alpha\sin\theta,\label{Doppler equation}
\end{equation}
where $\alpha\triangleq\frac{f_{\sf p}v_{\sf sat}r_{\sf E}}{c(r_{\sf E}+a_{\sf sat})}$.

\section{A Closed-Form Rainbow Beamformer Design}
We design a beamformer that achieves near-maximum BF gain for pilot signals received with distinct Doppler shifts, regardless of the signals' angle-of-arrival (AOA). The frequency-dependent BF gain is defined as 
\begin{align}
G(f,\theta,\tau,\phi)\triangleq\frac{1}{N}\left|\mathbf{w}(f,\tau,\phi)^{\sf H}\mathbf{a}\left(f,\theta\right)\right|^2.
\end{align} Accordingly, the BF gain for the signal received from a satellite located at angle $\theta$ is given by
\begin{align}
&G\left(f_{\sf p}+f_{\sf D}(\theta),\theta,\tau,\phi\right)
\nonumber\\
&=\frac{1}{N}\left|\sum_{n=1}^N e^{j(n-1)\left[\phi-2\pi\left\{f_{\sf p}+f_{\sf D}\left(\theta\right)\right\}\tau-\pi\frac{f_{\sf p}+f_{\sf D}\left(\theta\right)}{f_{\sf c}}\sin\theta\right]}\right|^2
\nonumber\\
&=\frac{\sin^2 \left[\frac{N}{2}\left\{\phi-2\pi\left\{f_{\sf p}+f_{\sf D}\left(\theta\right)\right\}\tau-\pi\frac{f_{\sf p}+f_{\sf D}\left(\theta\right)}{f_{\sf c}}\sin\theta\right\}\right]}{N\sin^2 \left[\frac{1}{2}\left\{\phi-2\pi\left\{f_{\sf p}+f_{\sf D}\left(\theta\right)\right\}\tau-\pi\frac{f_{\sf p}+f_{\sf D}\left(\theta\right)}{f_{\sf c}}\sin\theta\right\}\right]}
\nonumber\\
&\overset{(\rm{a})}{\leq} N,
\label{beamforming gain}
\end{align}
where {$\mathrm{(a)}$} holds with equality, i.e., the BF gain reaches its maximum at angle $\theta$, if and only if
\begin{align}
\phi-2\pi\left(f_{\sf p}+f_{\sf D}\left(\theta\right)\right)\tau-\pi\frac{f_{\sf p}+f_{\sf D}\left(\theta\right)}{f_{\sf c}}\sin\theta=2\gamma\pi,
\label{max_gain_condition}
\end{align}
with arbitrary integer $\gamma\in\mathbb{Z}$. 

Our beamformer design can be mathematically formulated as the following feasibility problem:
\begin{align}
&{\rm find}\quad\tau,\,\phi
\nonumber\\
&{\rm s.t.}\quad G\left(f_{\sf p}+f_{\sf D}(\theta),\theta,\tau,\phi\right)=N,\,\forall \theta\in\boldsymbol{\Psi}.
\label{Feasiblity problem}
\end{align} This formulation imposes infinitely many equality constraints over the angular domain $\boldsymbol{\Psi}$. \tcr{Solving problem \eqref{Feasiblity problem} is equivalent to finding $\{\tau,\phi\}$ that satisfy the maximum gain condition in Eq. \eqref{max_gain_condition} for all $\theta\in\boldsymbol{\Psi}$, while keeping $\{\tau,\phi\}$ constant.} By substituting the Doppler model from Eq. \eqref{Doppler equation} into Eq. \eqref{max_gain_condition}, we obtain
\begin{align}
\phi-2\pi\left(f_{\sf p}-\alpha\sin\theta\right)\tau-\pi\frac{f_{\sf p}-\alpha\sin\theta}{f_{\sf c}}\sin\theta=2\gamma\pi, \forall\theta\in\boldsymbol{\Psi}.
\label{equivalent condition for maximum gain}
\end{align} 
This equation involves both $\sin\theta$ and $\sin^2\theta$, and thus satisfying it for all $\theta$ requires that the coefficients of both terms be zero. However, the presence of the non-zero coefficient $\pi\frac{\alpha}{f_{\sf c}}$ for  $\sin^2\theta$ makes an exact solution for all $\theta$ infeasible. Since the Doppler frequency $-\alpha\sin\theta$ is on the order of tens to hundreds of kHz, significantly smaller than the pilot frequency $f_{\sf p}$ in the GHz range, we apply the approximation $\frac{f_{\sf p}-\alpha\sin\theta}{f_{\sf c}}\approx\frac{f_{\sf p}}{f_{\sf c}}$, yielding a simplified expression:
\begin{align}
\left(2\pi\alpha\tau-\pi\frac{f_{\sf p}}{f_{\sf c}}\right)\sin\theta+\phi-2\pi f_{\sf p}\tau=2\gamma\pi, \nonumber\\
\forall \theta\in\boldsymbol{\Psi}, \gamma\in\mathbb{Z}.
\label{Approximated equation}
\end{align} 
As a result, the equation reduces to a linear function in $\sin\theta$. The optimal progressive delay $\tau^\star$ that nullifies the coefficient of $\sin\theta$ is thus given by 
\begin{align}
    \tau^\star=\frac{f_{\sf p}}{2f_{\sf c}\alpha}=\frac{c(r_{\sf E}+a_{\sf sat})}{2f_{\sf c}v_{\sf sat}r_{\sf E}}.
\label{optimal delay}
\end{align}
Substituting this expression for $\tau^{\star}$ into Eq. \eqref{Approximated equation} yields the optimal progressive phase shift $\phi^{\star}$ as follows:
\begin{align}
\phi^\star=\frac{\pi f_{\sf p}^2}{\alpha f_{\sf c}}+2\gamma\pi=\frac{\pi f_{\sf p}c(r_{\sf E}+a_{\sf sat})}{f_{\sf c}v_{\sf sat}r_{\sf E}}+2\gamma\pi, \gamma\in\mathbb{Z}.
\label{optimal phase shift}
\end{align}
Due to the periodicity of phase, $\phi^{\star}$ can be expressed without loss of generality as
$\phi^{\star}=\big(\frac{\pi f_{\sf p}c(r_{\sf E}+a_{\sf sat})}{f_{\sf c}v_{\sf sat}r_{\sf E}}{\rm ~mod~}2\pi\big)$.

\begin{prop} 
The BF gain loss, i.e., the deviation from the maximum BF gain when using the beamformer from Eqs. \eqref{optimal delay}–\eqref{optimal phase shift} at angle $\theta$, can be approximated by
\begin{align}
\delta(\theta)&\triangleq \left\vert N-G(f_{\sf p}+f_{\sf D}(\theta),\theta,\tau^\star,\phi^\star)\right\vert
\nonumber\\
&\approx \frac{\pi^2f_{\sf p}^2v_{\sf sat}^2r_{\sf E}^2}{12f_{\sf c}^2c^2(r_{\sf E}+a_{\sf sat})^2}N^3\sin^4\theta,
\label{Beam gain loss}
\end{align}
where the worst-case beam gain loss occurs at $\theta=\pm \big(\frac{\pi}{2}-\beta\big)$. This validates the approximation $\frac{f_{\sf p}-\alpha\sin\theta}{f_{\sf c}}\approx\frac{f_{\sf p}}{f_{\sf c}}$ used in the beamformer design for practical LEO SATCOM scenarios. 
\end{prop}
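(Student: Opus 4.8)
The plan is to express the BF gain as a Dirichlet-kernel ratio governed by a single residual phase, Taylor-expand that ratio, and read off its leading deviation from $N$. First I would substitute the closed-form $\tau^\star$ and $\phi^\star$ of Eqs. \eqref{optimal delay}--\eqref{optimal phase shift} into the \emph{exact} phase argument appearing in Eq. \eqref{beamforming gain}, namely $\Theta(\theta)\triangleq\phi-2\pi(f_{\sf p}-\alpha\sin\theta)\tau-\pi\frac{f_{\sf p}-\alpha\sin\theta}{f_{\sf c}}\sin\theta$ with the Doppler model \eqref{Doppler equation} inserted but \emph{without} the design-stage approximation. Expanding, the term linear in $\sin\theta$ vanishes by the very choice of $\tau^\star$ and the constant term equals $2\gamma\pi$ by the choice of $\phi^\star$, so the only survivor is the $\sin^2\theta$ term that the approximation had discarded:
\begin{align}
\Theta(\theta)=2\gamma\pi+\pi\frac{\alpha}{f_{\sf c}}\sin^2\theta.
\end{align}
Thus the entire loss is driven by the single residual phase $\epsilon(\theta)\triangleq\pi\frac{\alpha}{f_{\sf c}}\sin^2\theta$.

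Substituting $\Theta=2\gamma\pi+\epsilon$ into the closed-form gain and using the $\pi$-periodicity of $\sin^2$ to drop the $2\gamma\pi$ offset reduces the gain to
\begin{align}
G=\frac{\sin^2\!\big(\tfrac{N}{2}\epsilon\big)}{N\sin^2\!\big(\tfrac{1}{2}\epsilon\big)}.
\end{align}
Because $\alpha$ is on the order of hundreds of kHz while $f_{\sf c}$ is in the GHz range, $\epsilon\ll1$, so I would apply $\sin^2 x = x^2-\tfrac{1}{3}x^4+O(x^6)$ to both numerator and denominator. The leading terms cancel to give exactly $N$, and retaining the $x^4$ corrections yields $G\approx N\,\frac{1-N^2\epsilon^2/12}{1-\epsilon^2/12}\approx N-\tfrac{N(N^2-1)}{12}\epsilon^2$. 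For the large arrays of interest I would invoke $N^2-1\approx N^2$, giving $\delta(\theta)=|N-G|\approx\tfrac{N^3}{12}\epsilon^2$. Inserting $\epsilon(\theta)=\pi\frac{\alpha}{f_{\sf c}}\sin^2\theta$ together with $\alpha=\frac{f_{\sf p}v_{\sf sat}r_{\sf E}}{c(r_{\sf E}+a_{\sf sat})}$ then reproduces Eq. \eqref{Beam gain loss}. The worst case follows at once: $\delta(\theta)\propto\sin^4\theta$ is monotonically increasing in $|\theta|$ over $\boldsymbol{\Psi}=[-\tfrac{\pi}{2}+\beta,\tfrac{\pi}{2}-\beta]$, hence maximized at the FOV edges $\theta=\pm(\tfrac{\pi}{2}-\beta)$.

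The main obstacle is the cancellation in the Taylor step: since the leading behaviour of numerator and denominator each reproduce the maximum $N$, the loss is a genuine $O(\epsilon^2)$ effect that surfaces only after both $\sin^2$ terms are expanded to fourth order and divided. I would therefore track the $-\tfrac{1}{3}x^4$ contributions consistently and verify that the next-order ($\epsilon^4$) terms are negligible given $N\epsilon\ll1$ across $\boldsymbol{\Psi}$. A secondary consistency check closes the argument: the same smallness of $\alpha/f_{\sf c}$ that makes $\epsilon$ small is precisely what justified the design approximation $\frac{f_{\sf p}-\alpha\sin\theta}{f_{\sf c}}\approx\frac{f_{\sf p}}{f_{\sf c}}$, so confirming that $\delta(\theta)$ remains small even at the band edges validates that approximation a posteriori, exactly as the statement claims.
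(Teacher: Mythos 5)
Your proposal is correct and follows essentially the same route as the paper's proof: both reduce the problem to the Dirichlet-kernel ratio evaluated at the residual phase $x=\tfrac{\pi\alpha}{f_{\sf c}}\sin^2\theta$ left over after $\tau^\star$ and $\phi^\star$ cancel the constant and linear-in-$\sin\theta$ terms, then Taylor-expand $\sin^2$ to fourth order to obtain $\delta\approx\tfrac{N(N^2-1)}{12}x^2\approx\tfrac{N^3}{12}x^2$. The only differences are cosmetic ordering (you substitute the optimal parameters before expanding, the paper expands symbolically in $x$ first) and your explicit handling of the $2\gamma\pi$ offset and the $\sin^4\theta$ monotonicity argument, which the paper leaves implicit.
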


\begin{proof} 
Let us define $x\triangleq \phi-2\pi\left\{f_{\sf p}+f_{\sf D}\left(\theta\right)\right\}\tau-\pi\frac{f_{\sf p}+f_{\sf D}\left(\theta\right)}{f_{\sf c}}\sin\theta$ for notational brevity. Then, we can find the approximated beam gain loss as follows:
\begin{align}
\delta(\theta) &= \left\vert N-G(f_{\sf p}+f_{\sf D}(\theta),\theta,\tau^\star,\phi^\star)\right\vert
=\left\vert N-\frac{\sin^2\frac{Nx}{2}}{N\sin^2\frac{x}{2}}\right\vert
\nonumber\\
&\overset{(\rm{a})}{\approx}N-\frac{\frac{N^2x^2}{4}-\frac{N^4x^4}{48}}{\frac{Nx^2}{4}-\frac{Nx^4}{48}}
=N-\frac{12N-N^3x^2}{12-x^2}
\nonumber\\
&\overset{(\rm{b})}{\approx}N-(12N-N^3x^2)\left(\frac{1}{12}+\frac{x^2}{144}\right)
\nonumber\\
&\overset{(\rm{c})}{\approx}\frac{N(N^2-1)}{12}x^2\approx \frac{N^3}{12}x^2.
\label{beam gain loss approximation}
\end{align}
Here, $\mathrm{(a)}$ follows from the fact that BF gain loss is always positive by definition and the Taylor series approximation $\sin^2x \approx x^2-\frac{1}{3}x^4$; $\mathrm{(b)}$ follows from the Taylor series approximation $\frac{1}{12-x^2}\approx\frac{1}{12}+\frac{x^2}{144}$. In $\mathrm{(c)}$, after expanding the expression, we omitted the fourth-order terms of $x$. At optimal values of $\tau=\tau^\star$ and $\phi = \phi^\star$, we have $x=\frac{\pi\alpha}{f_{\sf c}}\sin^2\theta$ after substitution. Finally, substituting this $x$ value into Eq. \eqref{beam gain loss approximation} yields Eq. \eqref{Beam gain loss}. Note that this approximation is highly accurate since the value of $x=\frac{\pi\alpha}{f_{\sf c}}\sin^2\theta$ is very small for the derived optimal values of $\tau^\star$ and $\phi^\star$. 
For instance, when $a_{\sf sat}=500$ km, $f_{\sf c}=f_{\sf p}=20$ GHz, $\beta=10^\circ$, and $N=64$, the true and approximated worst-case BF gain loss $\Delta\left(\pm \left(\frac{\pi}{2}-\beta\right)\right)$ are only $1.1672\times10^{-4}$ and $1.1675\times10^{-4}$, respectively, while the maximum BF gain is 64. This confirms that the BF gain loss of the designed rainbow beamformer is negligible and that the approximation of the presented BF gain loss formula is highly accurate.
\end{proof} \tcb{Note that near-full-gain BF across the entire FOV without beam sweeping is possible because rainbow BF achieves frequency-domain beam sweeping by varying beam directions with frequency. In contrast, conventional BF with frequency-independent beam directions requires time-domain beam sweeping for satellite acquisition within the FOV.}

\section{One-Shot Estimation Methods}
Our objective is to estimate the angular positions of multiple satellites from a single pilot signal reception, referred to as \emph{one-shot estimation}, using only a single time slot and without any prior knowledge. Let the measurement vector of the pilot be defined as $\mathbf{y}\triangleq [y[0], \cdots, y[L-1]]^{\sf T}$ and the AOA vector as $\boldsymbol{\Theta}\triangleq[\theta_1,\cdots,\theta_K]^{\sf T}$. The received signal can then be expressed as:
\begin{align}
\mathbf{y} = \mathbf{F}(\boldsymbol{\Theta}) \mathbf{s} + \mathbf{z},
\label{RX signal Matrix form}
\end{align}
where $\mathbf{F}(\boldsymbol{\Theta}) = [\mathbf{f}(\theta_1), \mathbf{f}(\theta_2), \dots, \mathbf{f}(\theta_K)]\in\mathbb{C}^{L\times K}$ with $\mathbf{f}(\theta_k) \triangleq [1, e^{j 2\pi \frac{f_{\sf D}(\theta_k)}{f_{\sf s}}}, \dots, e^{j 2 \pi \frac{f_{\sf D}(\theta_k)}{f_{\sf s}}(L-1) }]^{\sf T}\in\mathbb{C}^{L\times1}$ being the complex sinusoid sequence for the $k$-th satellite's Doppler shift; $\mathbf{s} = [\tilde{g}_1 x_1, \dots, \tilde{g}_K x_K]^{\sf T} \in \mathbb{C}^{K\times1}$ is received pilot symbol; ${\mathbf{z}}\triangleq[z[0],\cdots,z[L-1]]^{\sf T}\in \mathbb{C}^{L\times1}$ is the Gaussian noise vector. We here present three Doppler-aware angle estimation methods suitable for this one-shot scenario.

\subsection{FFT-based Estimation}
A simple yet powerful approach to estimate the Doppler frequencies from the received signal $\mathbf{y}$ is to use the fast Fourier transform (FFT). We extract \tcr{$K$ dominant peaks} from FFT of measurement $\mathbf{y}$ to estimate the Doppler frequencies $\{\hat{f}_{\sf D}(\theta_k)\}_{k=1}^K$.
Then, the angular position of the satellite can then be estimated using the relationship in Eq. \eqref{Doppler equation}:\footnote{\tcb{When Doppler frequency measurements alone are insufficient to uniquely determine satellite positions, supplementary information, such as satellite ephemeris data or the frequency-dependent beam pattern of the rainbow beamformer, can be employed to resolve the resulting ambiguity.}}
\begin{equation}
\hat{\theta}_k = -\arcsin\frac{\hat{f}_{\sf D}(\theta_k)}{\alpha}, \,\, k \in \{1, 2, \ldots, K\}.
\label{angle from doppler}
\end{equation}
\tcr{While this approach is computationally efficient, the FFT bin spacing of $\frac{f_s}{L}$ \cite{oppenheim1999discrete} may be too coarse for accurate satellite angle estimation, especially when $L$ is small or when multiple satellites are closely spaced. While zero-padding can reduce the FFT bin spacing and allow more precise spectral peak locations, it increases computational complexity because the FFT size grows with the padding length.} 

\subsection{Maximum Likelihood Estimation}
To achieve higher accuracy, we employ a maximum likelihood estimator (MLE). The angle estimates of satellites $\boldsymbol{\Theta}$ that maximizes the log-likelihood function is given by
\begin{equation}
\hat{\mathbf{\Theta}} = \arg\max_{\mathbf{\Theta}}\,\, \mathbf{y}^{\sf H} \mathbf{F}(\mathbf{\Theta})(\mathbf{F}(\mathbf{\Theta})^{\sf H}\mathbf{F}(\mathbf{\Theta}))^{-1}\mathbf{F}(\mathbf{\Theta})^{\sf H}\mathbf{y}.
\end{equation}
Here, the derivation process follows the typical MLE procedure and is omitted for brevity \cite{kay1993fundamentals}. In the case of a single satellite ($K=1$), the MLE is simplified as
\begin{equation}
\hat{\theta} = \arg\max_{\theta}\,\, \vert \mathbf{f}(\theta)^{\sf H}\mathbf{y}\vert,\,\,\theta\in\boldsymbol{\Psi}.
\end{equation}
Although MLE provides asymptotically optimal performance as $L\rightarrow\infty$ \cite{kay1993fundamentals}, it requires an exhaustive search over the parameter space $\boldsymbol{\Psi}$ since no closed-form solution exists. This makes MLE computationally inefficient, especially for cases with many satellites.

\subsection{Root-MUSIC Estimation}
\tcb{Doppler frequency can also be estimated through conventional multiple signal classification (MUSIC) algorithms, such as root-MUSIC \cite{rao1989performance}, which exploit the orthogonality between signal and noise subspaces.} However, there exists a challenge, as our goal is one-shot estimation using only a single time slot. If we compute the sample covariance as $\mathbf{R}_{{\mathbf{y}}}={\mathbf{y}}{\mathbf{y}}^{\sf H}$, $\mathbf{R}_{{\mathbf{y}}}$ is rank-1 and the eigenstructure-based estimation fails. To address this challenge, we compute the sample covariance using the \emph{spatial smoothing} technique \cite{Spatial_Smoothing_MUSIC} as
\begin{align}
\mathbf{R}_{\mathbf{y}} = \sum_{i=0}^{L-M}\mathbf{y}_{i:i+M-1} \mathbf{y}_{i:i+M-1}^H,
\end{align}
\tcb{where $\mathbf{y}_{i:i+M-1}=[y[i],\cdots,y[i+M-1]]^{\sf T}$ denotes the clipped measurement vector. This technique partitions the single measurement vector $\mathbf{y}$ into $(L-M+1)$ overlapping sub-vectors, creating multiple virtual snapshots from a single received signal. With an appropriately chosen value of $M$, this spatial smoothing approach effectively resolves the rank deficiency \cite{Spatial_Smoothing_MUSIC}.} Next, $\mathbf{R}_\mathbf{y}$ can be eigendecomposed as follows:
\begin{align}
\mathbf{R}_{\mathbf{y}} = \mathbf{E}_{\sf s} \mathbf{\Lambda}_{\sf s} \mathbf{E}_{\sf s}^{\sf H} + \mathbf{E}_{\sf n} \mathbf{\Lambda}_{\sf n} \mathbf{E}_{\sf n}^{\sf H},
\end{align}
where column vectors in ${\mathbf{E}_{\sf s}}$ and ${\mathbf{E}_{\sf n}}$ span the signal and noise subspace, respectively. $\mathbf{\Lambda}_{\sf s}$ and $\mathbf{\Lambda}_{\sf n}$ are the corresponding eigenvalue matrices. The noise subspace is nearly orthogonal to the complex sinusoid sequence vectors $\mathbf{f}(\theta_k)$, i.e., $\mathbf{f}(\theta_k)^{\sf H} \mathbf{E}_{\sf n} \approx \mathbf{0}$ for $k \in \{1,\cdots,K\}$. To find such $\{\theta_k\}_{k=1}^K$, we define the noise subspace polynomial:
\begin{align}
g(z) = \mathbf{a}(z)^{\sf H} \mathbf{E}_{\sf n} \mathbf{E}_{\sf n}^{\sf H} \mathbf{a}(z),
\label{root-MUSIC_equation}
\end{align}
where $\mathbf{a}(z) = [1, z, z^2, \dots, z^{M-1}]^{\sf T}$. We then numerically find the $K$ roots of $g(z)=0$ that are nearest to the unit circle. By denoting the $k$-th root as $z_k$, the Doppler frequency can be computed as $\hat{f}_{\sf D}(\theta_k)=\frac{f_{\sf s}}{2\pi}\angle z_k$. Finally, the AOA of the received signal can be computed using Eq. \eqref{angle from doppler}.
\begin{remark}{\rm \textbf{(Algorithm Complexity Comparison):}} \tcb{The big-O complexities of the presented estimation algorithms are listed in ascending order as follows:} $\mathcal{O}(L \log L)$ for FFT-based estimation involving FFT operations; $\mathcal{O}(L^3)$ for Root-MUSIC involving sample covariance computation, eigen-decomposition, and root-finding; $\mathcal{O}(N_{\sf grid}^K(L + K^3))$ for MLE where $N_{\sf grid}$ is the number of angular grid points and the complexity involves matrix inversion and grid search. Notably,  MLE suffers from poor scalability in multi-satellite scenarios as its complexity grows exponentially with the number of satellites $K$.\end{remark}

\tcb{The three algorithms represent exemplary methods that demonstrate the proposed framework's feasibility and evaluate its performance across different complexity-accuracy trade-offs. In practical implementations, these algorithm options enable adaptive selection based on specific operational requirements and available computational resources.}
\section{Simulation Results} 

\begin{figure}[t]
\centering
\includegraphics[width=0.7\linewidth]{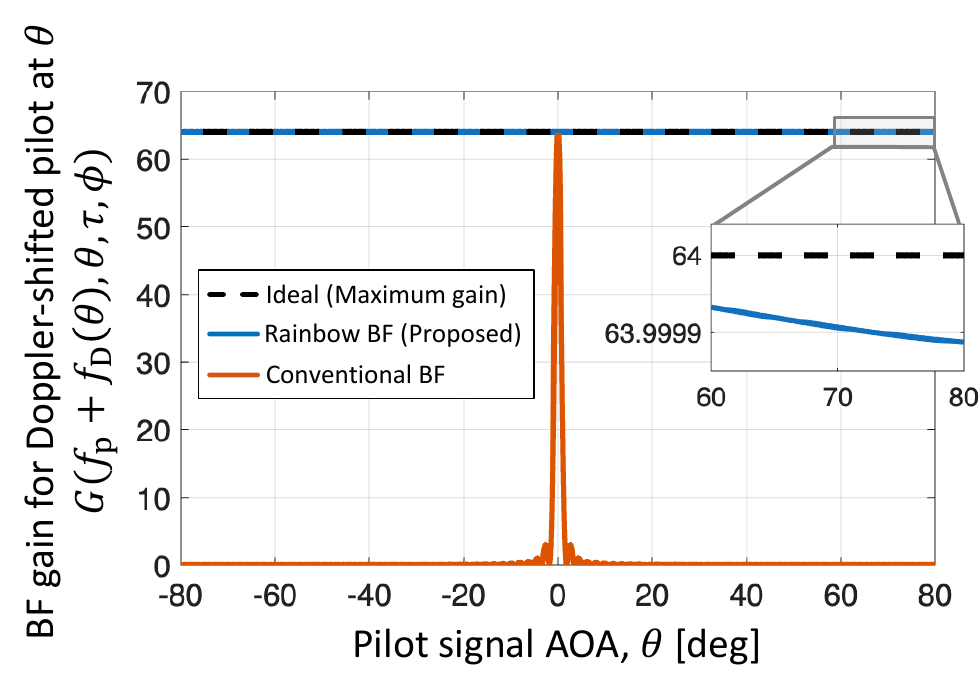}
\caption{\tcb{BF gain as a function of AOA for Doppler-shifted pilot signals.}}
\label{Simul_BF_gain}
\vspace{-3mm}
\end{figure}

\begin{figure}[t]
\centering
\includegraphics[width=0.7\linewidth]{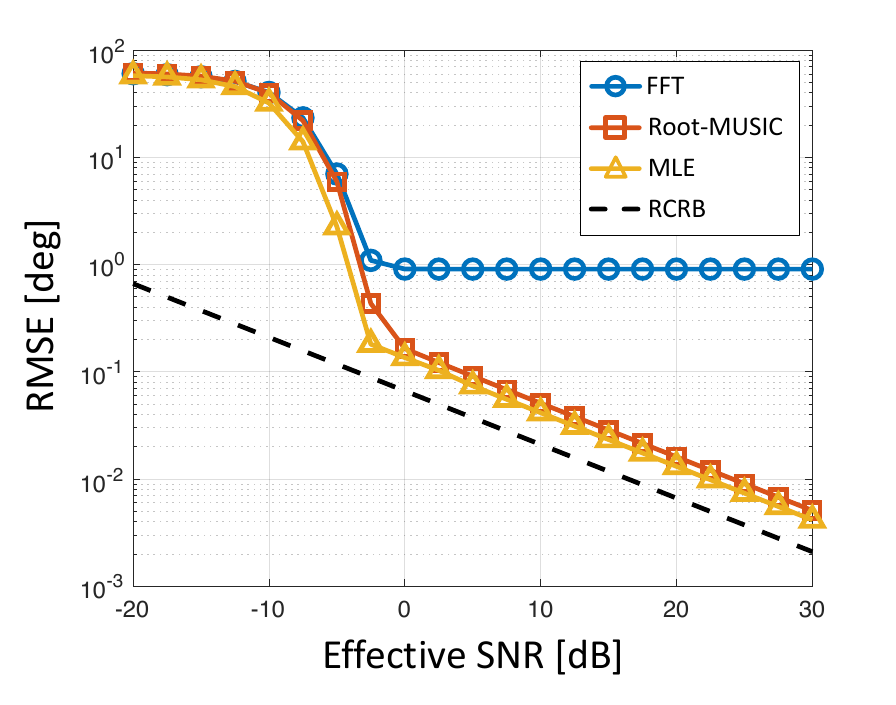}
\caption{RMSE performance of proposed rainbow BF-empowered satellite acquisition for a single satellite at \num{-45}$^\circ$ in a single time slot. }
\label{Simul_SingleSAT}
\vspace{-3mm}
\end{figure}

\begin{figure}[t]
\centering
\includegraphics[width=0.7\linewidth]{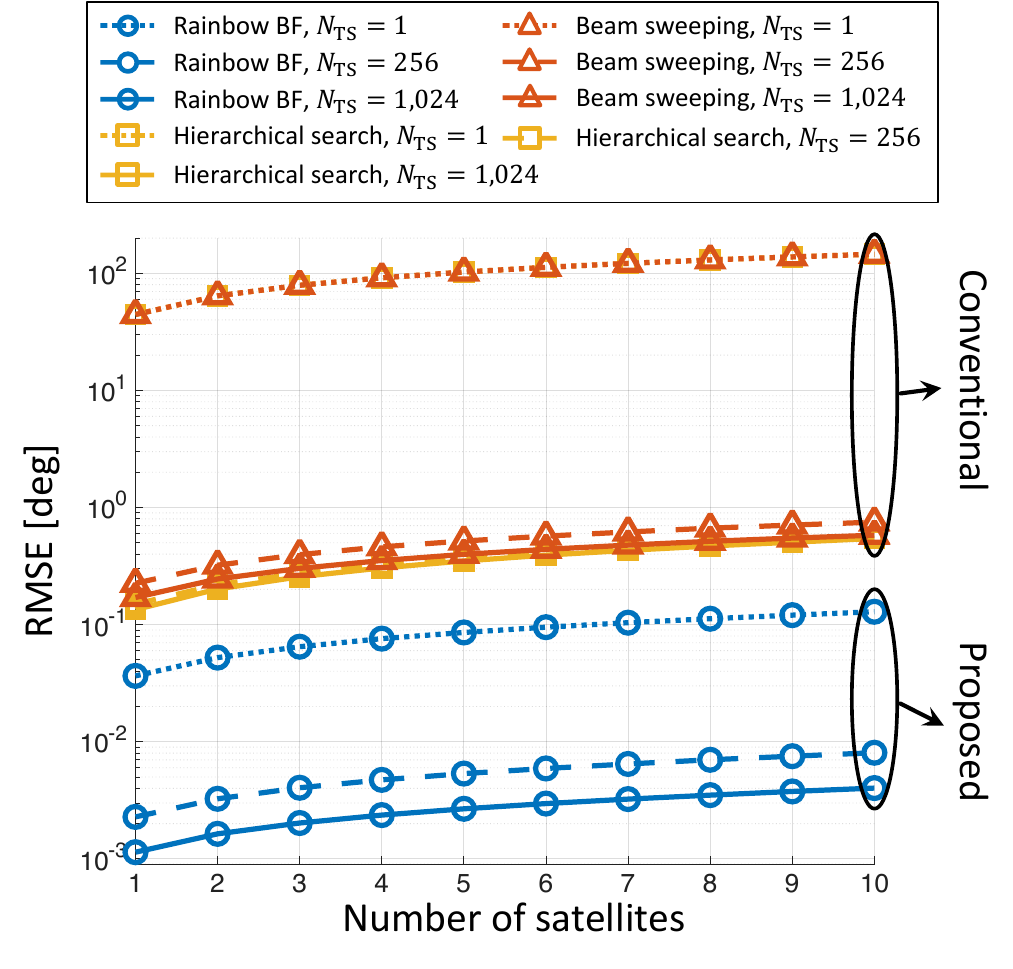}
\caption{\tcbb{RMSE performance comparison among conventional beam sweeping, hierarchical beam sweeping, and proposed rainbow BF-based satellite acquisition for multiple satellites with ${\sf SNR}_k=15$ dB.}}
\vspace{-3mm}
\label{Simul_MultiSAT}
\end{figure}

To assess our proposed approach, we consider the following simulation parameters: satellite altitude $a_{\sf sat}=500 {\rm ~km}$, center and pilot frequency $f_{\sf c}=f_{\sf p}=20 {\rm~GHz}$, \tcb{sampling frequency $f_{\sf s}=1 {\rm~MHz}$}, minimum elevation angle $\beta=10^\circ$, number of samples per time slot $L=64$, smoothing window size $M=32$, and the number of antenna elements $N=64$. The 3 dB beamwidth is approximately $1.6^\circ$. \tcb{The effective signal-to-noise ratio (SNR) of the signal from the $k$-th satellite is defined as 
\begin{align}
    {\sf SNR}_k = \frac{\mathbb{E}[|g_k(f_{\sf p}+f_{\sf D}(\theta_k))|^2]\mathbb{E}[|x_k|^2]NL}{\sigma^2}, \,\forall k.
\end{align}}

\vspace{-3mm}
\subsection{\tcb{Beamforming Performance}}
\tcb{Fig. \ref{Simul_BF_gain} shows the BF gain $G\left(f_{\sf p}+f_{\sf D}(\theta),\theta,\tau,\phi\right)$ for the signal received from a satellite located at angle $\theta$, where maintaining this value at $N$ for all $\theta$ was the beamformer design goal in Sec. III. Conventional BF can only amplify signals received from a very narrow angular range at a specific time instant. In contrast, the rainbow beamformer achieves near-maximum BF gain for pilot signals received with distinct Doppler shifts, regardless of the signals' AOA. Furthermore, even though we accept beam gain degradation during beamformer design, this degradation is negligible across all angular regions. Therefore, rainbow BF can amplify Doppler-shifted pilot signals from any direction without beamformer updates. In contrast, conventional BF requires beam sweeping over time.}

\vspace{-3mm}
\subsection{\tcb{Satellite Acquisition Performance}}
Fig. \ref{Simul_SingleSAT} compares the one-shot (single time slot) estimation performance of the three estimation algorithms with the root Cramér-Rao bound (RCRB) for a single satellite scenario. The RCRB represents the lower bound of the standard deviation for any unbiased estimator \cite{kay1993fundamentals}. \tcb{The algorithms achieve good root mean squared error (RMSE) in the order of MLE, root-MUSIC, and FFT, which is the reverse order of their computational complexity.} This demonstrates a clear trade-off between estimation algorithm complexity and estimation accuracy. \tcb{In practice, one of these algorithms can be selected based on available computational resources and operational requirements.}

\tcbb{Next, we compare the performance of three approaches: conventional beam sweeping, hierarchical beam search, and the proposed rainbow BF method, as shown in Fig. \ref{Simul_MultiSAT}. In conventional beam sweeping, the FOV is uniformly divided into $N_{\sf TS}$ angular segments with beam step size $\Delta\theta= \frac{(\pi-2\beta)}{N_{\sf TS}}$, where $N_{\sf TS}$ denotes the total number of available time slots. The beam direction at the $m$-th time slot is given by $\theta_m^{\sf sweep} = -\frac{\pi}{2}+\beta +(m-\frac{1}{2})\Delta\theta,$ and satellite directions are estimated based on peak received power. The hierarchical approach \cite{7390101} adopts a two-stage coarse-to-fine search strategy. In Stage 1, $\frac{N_{\sf TS}}{2}$  time slots are allocated for coarse sweeping over the entire FOV with step size $2\Delta\theta$, yielding $K$ candidate regions $\{\hat{\theta}_k^{\sf coarse}\}_{k=1}^K$ via peak detection. In Stage 2, the remaining $\frac{N_{\sf TS}}{2}$ time slots are distributed among the candidates, allocating $\lfloor \frac{N_{\sf TS}}{2K} \rfloor$ slots per candidate to perform fine sweeps within local regions $[\hat{\theta}_k^{\sf coarse} - \Delta\theta, \hat{\theta}_k^{\sf coarse} + \Delta\theta]$ for each $k$. Here, the 3 dB beamwidth is approximately $1.6^\circ$. For the proposed rainbow BF, time-averaging of the measurement $\mathbf{y}$ over $N_{\sf TS}$ time slots is performed for fair comparison. When $N_{\sf TS}=1$, both beam sweeping and hierarchical search yield a fixed estimate at $0^\circ$. We compute the error in each realization as the sum of estimation errors across all satellites. }

\tcbb{The results confirm that the proposed rainbow BF-empowered satellite acquisition accurately estimates multiple satellites using only a single time slot. Moreover, rainbow BF significantly outperforms both conventional and hierarchical beam sweeping approaches. This substantial performance gap originates from a fundamental difference in search mechanisms. Unlike time-domain searching methods that rely on sequential received-power scanning, rainbow BF enables a fully parallel search across the entire angular space via frequency-domain beam sweeping. In particular, different frequency components simultaneously probe distinct angular regions by exploiting angle-dependent Doppler shifts and beam-squint effects. Notably, rainbow BF with a single time slot achieves performance comparable to or better than hierarchical sweeping with $N_{\sf TS}=1{,}024$, thereby validating the feasibility of accurate one-shot multi-satellite acquisition.}

\section{Conclusion}\label{Sec6}
This correspondence introduced a novel one-shot satellite acquisition framework that leverages Doppler shifts and beam-squint effects rather than compensating for them. A closed-form rainbow beamformer was developed to enable simultaneous multi-satellite reception with near-maximum BF gain. We also presented three viable satellite angle estimation algorithms. Simulation results confirm that our approach significantly outperforms conventional beam sweeping methods, achieving accurate and efficient one-shot multi-satellite acquisition. \tcb{Future work will extend the proposed approach to three-dimensional geometry to enhance practical applicability.}

\begingroup
\bibliographystyle{IEEEtran}
\bibliography{bib_arXiv}

\begin{thebibliography}{10}
\providecommand{\url}[1]{#1}
\csname url@samestyle\endcsname
\providecommand{\newblock}{\relax}
\providecommand{\bibinfo}[2]{#2}
\providecommand{\BIBentrySTDinterwordspacing}{\spaceskip=0pt\relax}
\providecommand{\BIBentryALTinterwordstretchfactor}{4}
\providecommand{\BIBentryALTinterwordspacing}{\spaceskip=\fontdimen2\font plus
\BIBentryALTinterwordstretchfactor\fontdimen3\font minus \fontdimen4\font\relax}
\providecommand{\BIBforeignlanguage}[2]{{%
\expandafter\ifx\csname l@#1\endcsname\relax
\typeout{** WARNING: IEEEtran.bst: No hyphenation pattern has been}%
\typeout{** loaded for the language `#1'. Using the pattern for}%
\typeout{** the default language instead.}%
\else
\language=\csname l@#1\endcsname
\fi
#2}}
\providecommand{\BIBdecl}{\relax}
\BIBdecl

\bibitem{kodheli2020satellite}
O.~Kodheli \emph{et~al.}, ``Satellite communications in the new space era: A survey and future challenges,'' \emph{IEEE Commun. Surv. Tutor.}, vol.~23, no.~1, pp. 70--109, 2021.

\bibitem{yang2023starlink}
C.~Yang and A.~Soloviev, ``Starlink {D}oppler and {D}oppler rate estimation via coherent combining of multiple tones for opportunistic positioning,'' in \emph{Proc. IEEE/ION Posit. Locat. Navig. Symp. (PLANS)}, 2023, pp. 1143--1153.

\bibitem{1638663}
T.~Wang, J.~G. Proakis, E.~Masry, and J.~R. Zeidler, ``Performance degradation of {OFDM} systems due to {D}oppler spreading,'' \emph{IEEE Trans. Wirel. Commun.}, vol.~5, no.~6, pp. 1422--1432, 2006.

\bibitem{10464931}
A.~K. Meshram, S.~Kumar, J.~Querol, and S.~Chatzinotas, ``Doppler effect mitigation in {LEO}-based {5G} non-terrestrial networks,'' in \emph{Proc. IEEE Global Commun. Conf. (GLOBECOM) Workshops}, 2023, pp. 311--316.

\bibitem{dai2022delay}
L.~Dai, J.~Tan, Z.~Chen, and H.~V. Poor, ``Delay-phase precoding for wideband {THz} massive {MIMO},'' \emph{IEEE Trans. Wirel. Commun.}, vol.~21, no.~9, pp. 7271--7286, 2022.

\bibitem{you2022beam}
L.~You \emph{et~al.}, ``Beam squint-aware integrated sensing and communications for hybrid massive {MIMO} {LEO} satellite systems,'' \emph{IEEE J. Sel. Areas Commun.}, vol.~40, no.~10, pp. 2994--3009, 2022.

\bibitem{mewes2025beam}
T.~Mewes, W.~Rave, and G.~Fettweis, ``Beam squint reduction by combining phased subarrays with true time delays in baseband,'' \emph{IEEE Wirel. Commun. Lett.}, vol.~14, no.~4, pp. 1159--1163, 2025.

\bibitem{ratnam2022joint}
V.~V. Ratnam, J.~Mo, A.~Alammouri, B.~L. Ng, J.~Zhang, and A.~F. Molisch, ``Joint phase-time arrays: A paradigm for frequency-dependent analog beamforming in {6G},'' \emph{IEEE Access}, vol.~10, pp. {733}{64}--{733}{77}, 2022.

\bibitem{10914514}
A.~Wadaskar, V.~Boljanovic, H.~Yan, and D.~Cabric, ``Fast {3D} beam training with true-time-delay arrays in wideband millimeter-wave systems,'' \emph{IEEE Trans. Wirel. Commun.}, vol.~24, no.~6, pp. 5146--5162, 2025.

\bibitem{li2024stacked}
Q.~Li, M.~El-Hajjar, C.~Xu, J.~An, C.~Yuen, and L.~Hanzo, ``Stacked intelligent metasurfaces for holographic {MIMO}-aided cell-free networks,'' \emph{IEEE Trans. Commun.}, vol.~72, no.~11, pp. 7139--7151, 2024.

\bibitem{li2025holographic}
Q.~Li, M.~El-Hajjar, K.~Cao, C.~Xu, H.~Haas, and L.~Hanzo, ``Holographic metasurface-based beamforming for multi-altitude {LEO} satellite networks,'' \emph{IEEE Trans. Wirel. Commun.}, vol.~24, no.~4, pp. 3103--3116, 2025.

\bibitem{you2020massive}
L.~You, K.-X. Li, J.~Wang, X.~Gao, X.-G. Xia, and B.~Ottersten, ``Massive {MIMO} transmission for {LEO} satellite communications,'' \emph{IEEE J. Sel. Areas Commun.}, vol.~38, no.~8, pp. 1851--1865, 2020.

\bibitem{10680149}
J.~Seong, J.~Park, D.-H. Jung, J.~Park, and W.~Shin, ``Rate-splitting for joint unicast and multicast transmission in {LEO} satellite networks with non-uniform traffic demand,'' \emph{IEEE J. Sel. Areas Commun.}, vol.~43, no.~1, pp. 122--138, 2025.

\bibitem{10694172}
A.~B. De~F.~Diniz, T.~Eriksson, and U.~Gustavsson, ``Doppler shift estimation for satellite communications using linear estimators,'' in \emph{Proc. IEEE Int. Workshop Signal Process. Adv. Wireless Commun. (SPAWC)}, 2024, pp. 686--690.

\bibitem{oppenheim1999discrete}
A.~V. Oppenheim, \emph{Discrete-time signal processing}.\hskip 1em plus 0.5em minus 0.4em\relax Pearson Education India, 1999.

\bibitem{kay1993fundamentals}
S.~M. Kay, \emph{Fundamentals of statistical signal processing: Estimation theory}.\hskip 1em plus 0.5em minus 0.4em\relax Prentice-Hall, Inc., 1993.

\bibitem{rao1989performance}
B.~D. Rao and K.~S. Hari, ``Performance analysis of root-{MUSIC},'' \emph{IEEE Trans. Acoust., Speech, Signal Process.}, vol.~37, no.~12, pp. 1939--1949, 1989.

\bibitem{Spatial_Smoothing_MUSIC}
Q.~Chen and R.~Liu, ``On the explanation of spatial smoothing in {MUSIC} algorithm for coherent sources,'' in \emph{Proc. IEEE Int. Conf. Inf. Sci. Technol. (ICIST)}, 2011, pp. 699--702.

\bibitem{7390101}
Z.~Xiao, T.~He, P.~Xia, and X.-G. Xia, ``Hierarchical codebook design for beamforming training in millimeter-wave communication,'' \emph{IEEE Trans. Wirel. Commun.}, vol.~15, no.~5, pp. 3380--3392, 2016.

\end{thebibliography}
\endgroup

\end{document}